\newtheorem{thm}{Theorem}[section]
\newtheorem{cor}[thm]{Corollary}
\newtheorem{lem}[thm]{Lemma}
\newcommand{\bit}{\begin{itemize}}
\newcommand{\eit}{\end{itemize}}
\newcommand{\bcor}{\begin{cor}}
\newcommand{\ecor}{\end{cor}}
\newcommand{\fset}{\mathcal{R}}
\title{The Storage vs Repair Bandwidth Trade-off for Multiple Failures in Clustered Storage Networks}
\author{\IEEEauthorblockN{Vitaly Abdrashitov, N. Prakash and Muriel M\'edard}\\
	\IEEEauthorblockA{Research Laboratory of Electronics, MIT, USA. \ \ Email: \{vit, prakashn, medard\}@mit.edu.}
	\thanks{This work is in part supported by the Air Force Office of Scientific Research (AFOSR) under award No FA9550-14-1-043, and in part supported by the National Science Foundation (NSF) under Grant No.  CCF-1527270.}
}
\begin{document}

\maketitle
\thispagestyle{empty}
\pagestyle{empty}

\begin{abstract}
We study the trade-off between storage overhead and inter-cluster repair bandwidth in clustered storage systems, while recovering from multiple node failures within a cluster. A cluster is a collection of $m$ nodes, and there are $n$ clusters. For data collection, we download the entire content from any $k$ clusters. For repair of $t \geq 2$ nodes within a cluster, we take help from $\ell$ local nodes, as well as $d$ helper clusters.  We characterize the optimal trade-off under functional repair, and also under exact repair for the minimum storage  and minimum inter-cluster bandwidth (MBR) operating points. Our bounds show the following interesting facts: $1)$ When $t|(m-\ell)$ the trade-off is the same as that under $t=1$, and thus there is no advantage in jointly repairing multiple nodes, $2)$ When $t \nmid (m-\ell)$, the optimal file-size at the MBR point under exact repair can be strictly less than that under functional repair. $3)$ Unlike the case of $t=1$, increasing the number of local helper nodes does not necessarily increase the system capacity under functional repair.
\end{abstract}

\vspace{-0.1in}

\section{Introduction}

We study the storage-overhead vs repair-bandwidth trade-off for multiple node failures under the setting of  clustered storage networks. Our model is motivated by applications to cloud storage settings, where user data is spread across distinct data-centers, even possibly belonging to different service providers (as in a cloud-of-cloud setting). Practical implementation studies that show the benefits of Reed-Solomon codes for data storage in cloud-of-cloud settings appear in \cite{racs, depsky, cyrus}. In our model, a cluster represents a data center. In such networks, it is common to differentiate between intra- and inter-cluster bandwidth costs; typically, intra-cluster bandwidth cost is much less than inter-cluster bandwidth cost. To keep the model simple, we ignore any hierarchical topology that may be present within a data-center (cluster), and simply assume equal cost connectivity between any two nodes inside a cluster. We also assume direct connectivity between any two clusters in the network.

In our model, a cluster is a collection of $m$ physical nodes (see Fig. \ref{fig:sys_model}), each of size $\alpha$ symbols from the finite field $\mathbb{F}_q$, for some $q$. There are $n$ clusters in total in the system. A file of size $B$ symbols is encoded into $nm\alpha$ symbols and stored across the $nm$ storage nodes. We follow a clustering approach for both data collection and repairs. For data collection, we demand that the entire content of an arbitrary set of $k$ clusters is sufficient to decode the original file. Thus, during data collection, we assume a cluster to be either completely available (if we connect to it) or completely unavailable (if we do not connect to it). Such an assumption is realistic in a multi-data-center cloud setting~\cite{hitachi}.

\begin{figure}
	\centering
	\includegraphics[width=65mm]{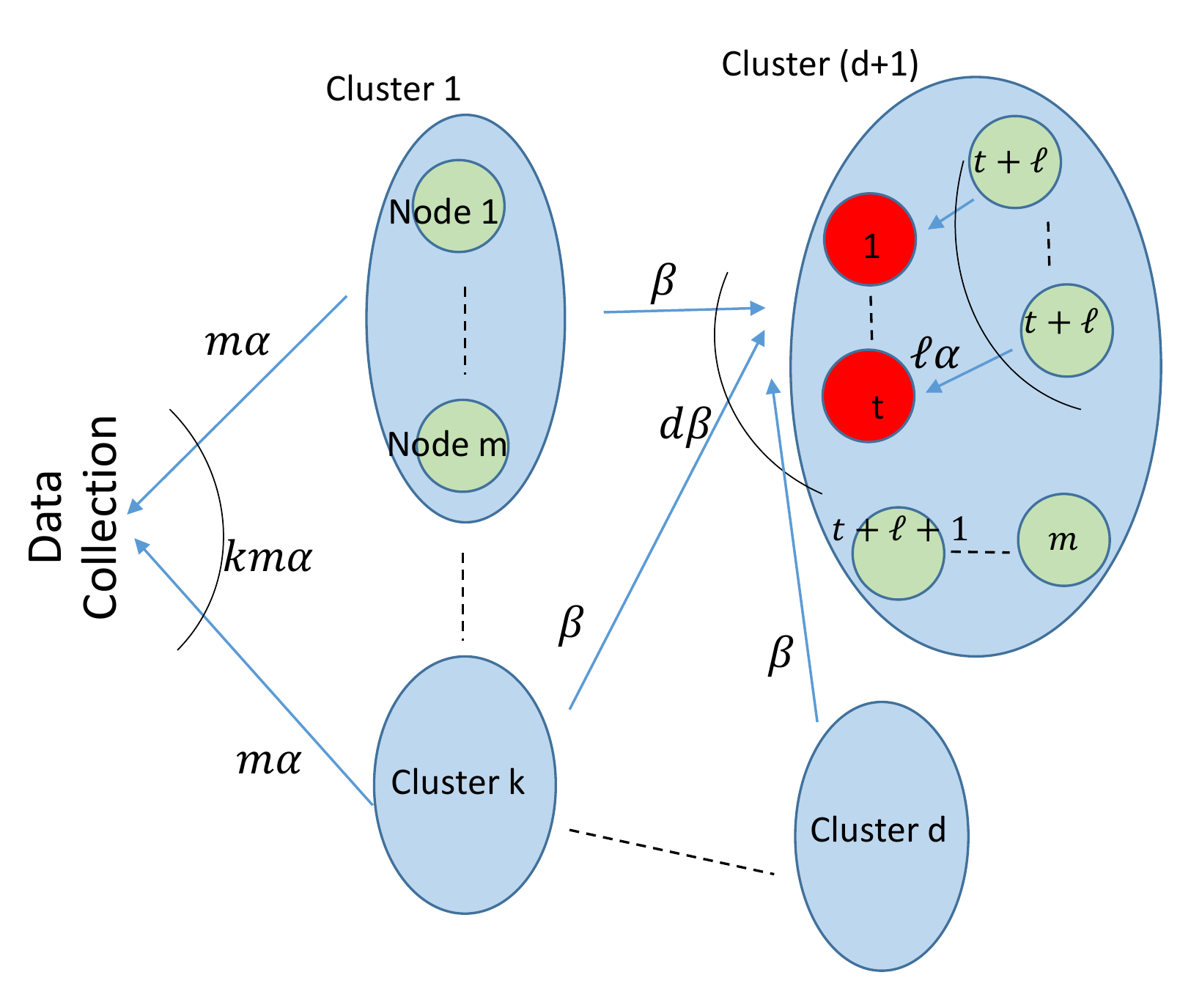}
	\caption{System model for clustered data storage, illustrating data collection and repair from multiple failures. The model is suitable for storing coded data across multiple data centers (clusters) as in cloud-of-cloud applications.}
	\label{fig:sys_model}
\end{figure}
\begin{table}[tb]
	\centering
	\begin{tabular}{ |c|c| }
		\hline
		Special Case of Our Model& Prior Work \\
		\hline \hline
		No clustering, single-node  & Classical RC, \cite{dimakis}  \\
		repair: $m = 1, t = 1, \ell = 0$ & \\
		\hline
		With clustering, single-node &  Generalized RC, \cite{genrcTIT},  \\
		repair: $t = 1$ & Studies impact of $\ell$ on \\
		& $1)$ storage vs  Inter-cluster BW trade-off \\
		& $2)$ Intra-cluster BW \\
		\hline
		With clustering, multiple-node&  Two Layer coding scheme~\cite{skoglund_partial}\\
		repairs: $t \geq 1$ &  Study limited to $\ell = m - t$ \\
		\hline
	\end{tabular}
	\caption{Special Cases of System Model Appearing in Literature.}
	\label{tab:special_cases}
	\vspace{-0.3in}
\end{table}

\begin{figure*}
	\centering
	\subfloat[Trade-off for an $(n = 5, k = 4, d = 4)(m =3, \ell = 0, t = 2)$ system.]{\label{fig:tradeoff}\includegraphics[height=1.5in]{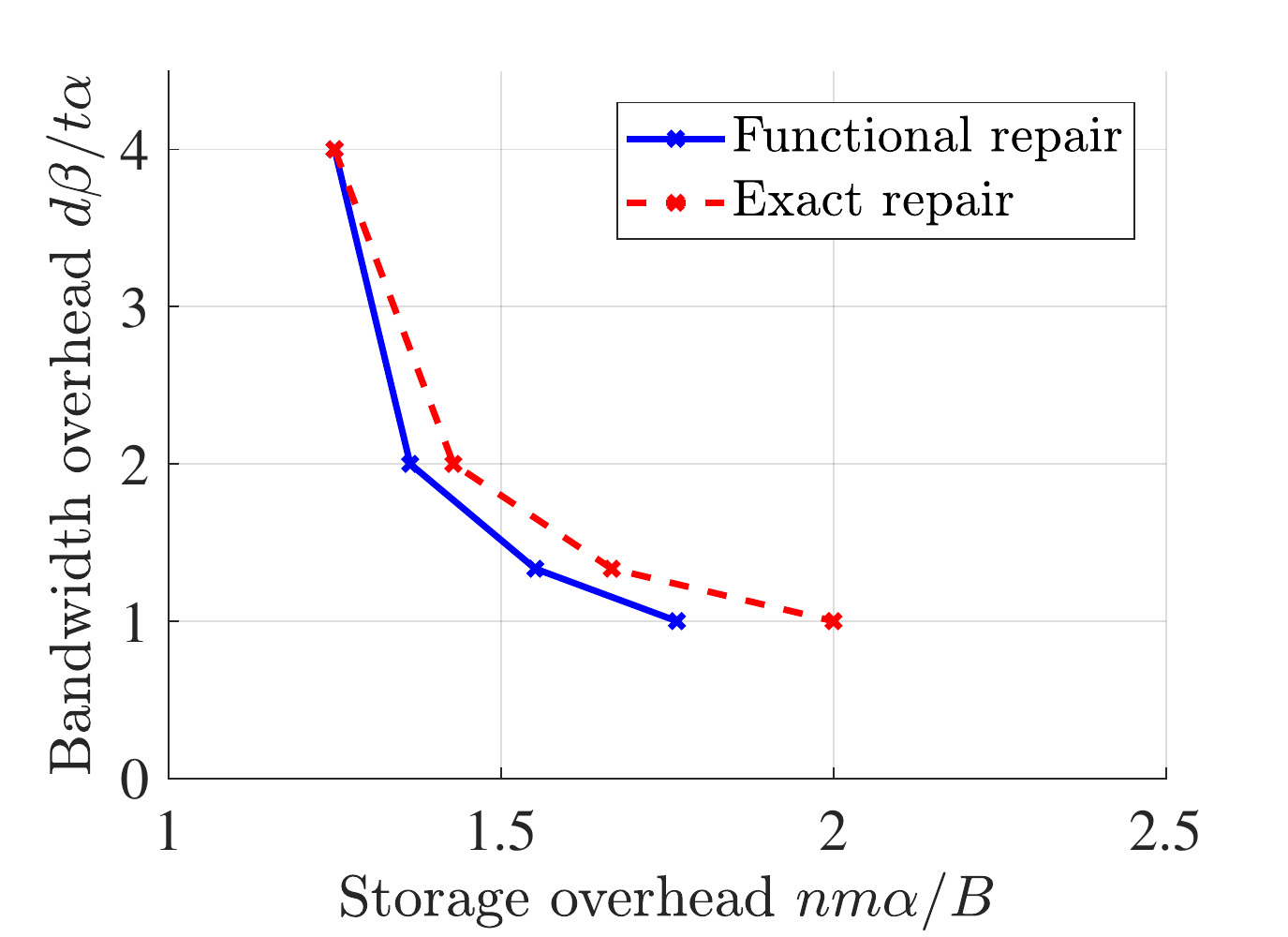}}
	\hfil
	\subfloat[RLNC simulation for an $(n  = 3, k = 2, d = 2)(\alpha = 2, \beta = 2)(m = 3, \ell = 0, t=2)$ system.]{\label{fig:rlnc_filesize}\includegraphics[height=1.5in]{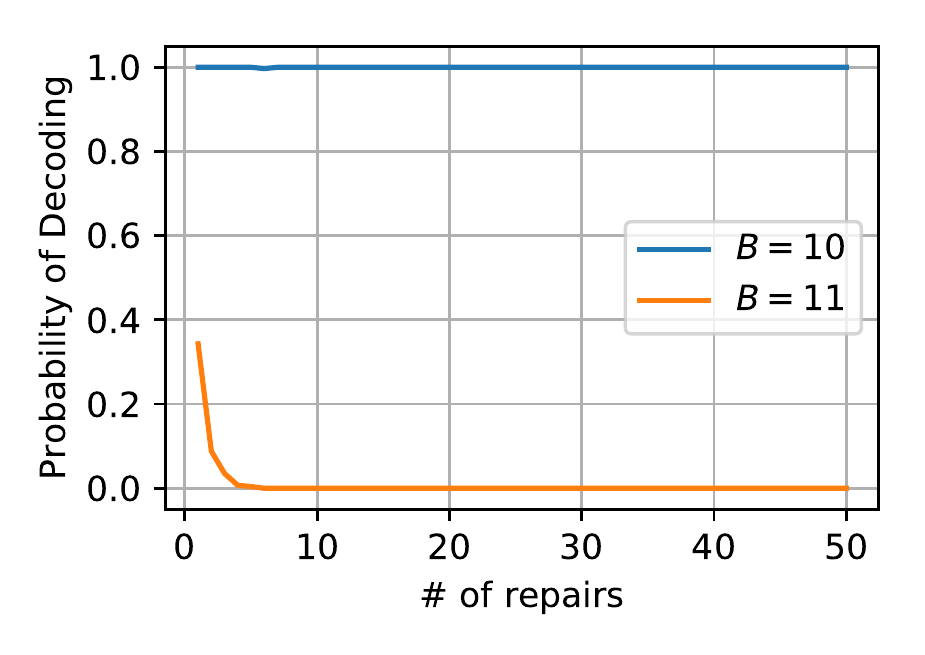}}
	\hfil
	\subfloat[Impact of number of local helper nodes, $\ell$, on file-size for an $(n = 7, k = 4, d = 5, m=17, t=5)$ clustered storage system at MBR point $(\alpha=1, \beta=1)$. Local help does not provide any advantage unless $\ell > 2$.]{\label{fig:ell_plot}\includegraphics[height=1.5in]{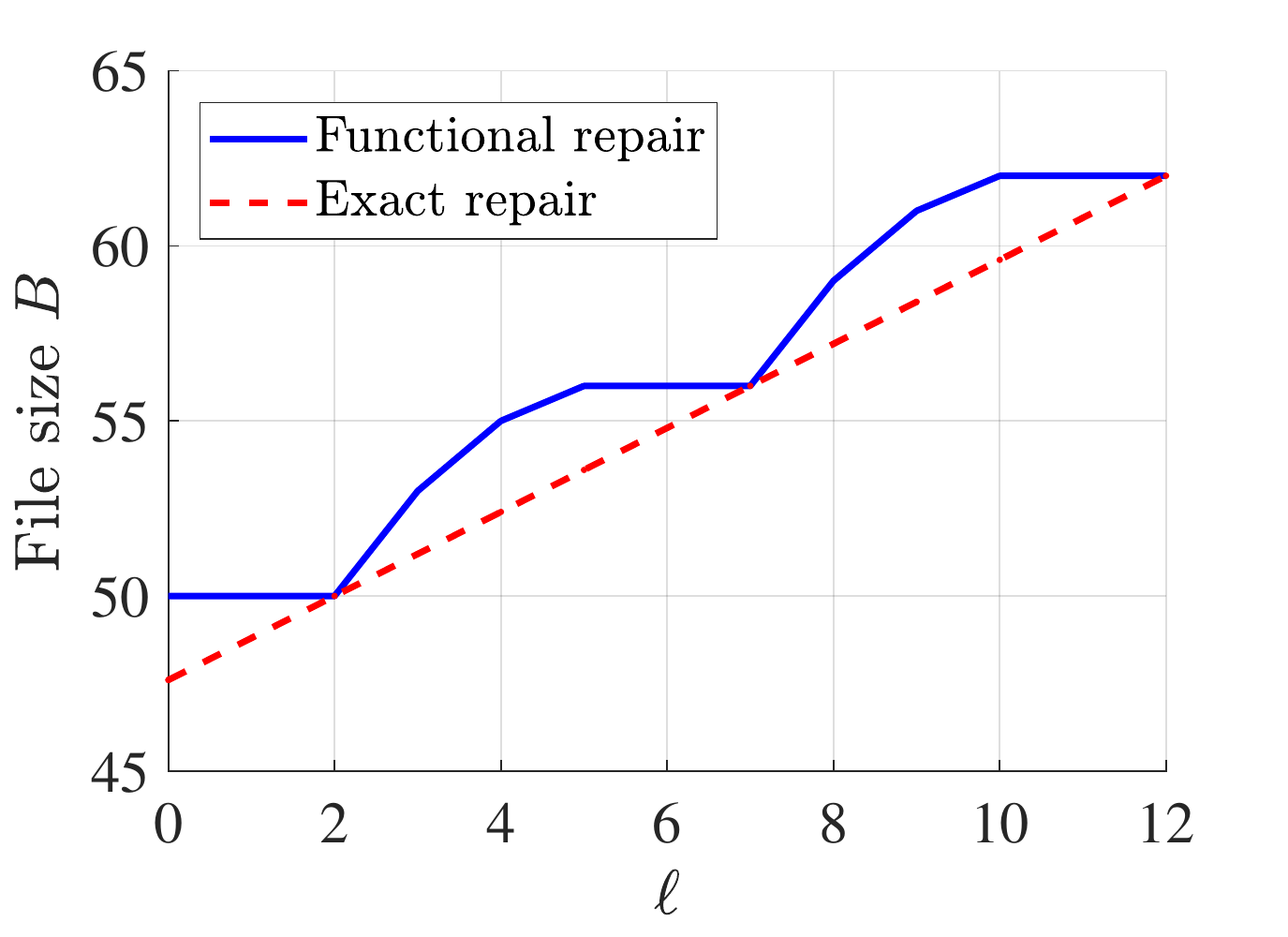}}
	\caption{Illustration of the implications of the exact and functional repair file-size bounds.}
	\label{fig:implication}
\end{figure*}

Nodes within a cluster represent failure domains; in this work, we deal with the problem of recovery from  $t$ node-failures that occur in one of the $n$ clusters. While single-node failure is the most common failure event, correlated failures of nodes within a data center is an important issue reported in practice \cite{ford2010availability} and this motivates our failure model. The $t$ \emph{newcomer} nodes are added to the same cluster as replacement to those failed. For restoring the content of the $t$ new nodes, we download local as well as external content. External help is taken from  any set of $d$ other clusters, each of which contributes $\beta$ inter-cluster symbols. The $\beta$ symbols from any cluster is possibly a function of all the $m\alpha$ symbols in the cluster. For completeness of the model, we assume the presence of a compute unit in the cluster that can combine these $m\alpha$ symbols to generate the $\beta$ helper symbols. We also download entire content from any set of $\ell \leq m - t$ surviving nodes in the failure cluster. Once again, we assume the presence of a compute unit in the failure cluster that combines all the local and external helper data, and generates the content of the replacement nodes. We assume that the encoding function does not introduce any local dependence among the nodes of a cluster; for e.g., the model excludes the possibility of a local parity node within a cluster. An analysis of the impact of such local parity nodes  is left for future work. We also restrict ourselves to the case $d \geq k$, even though analysis for the case $0 \leq d \leq k-1$ is perfectly feasible.

A code satisfying the above model requirements for repair and data collection shall be called multi-node repair generalized regenerating code (MRGRC) $\mathcal{C}$  with parameters $\{(n,k,d),(\alpha, \beta),(m,\ell,t)\}$. In this paper, we study the trade-off between storage-overhead (S.O.) $\frac{mn\alpha}{B}$  and inter-cluster (IC) repair-bandwidth-overhead $\frac{d\beta}{t\alpha}$ for the above setting, under both functional and exact repair. Under exact repair, the content of any of the $t$ new nodes is exactly the same as what was stored before failure, while in functional repair, the restored content allows data-collection and further repairs.

Special cases of the model have been studied in the past (see Table \ref{tab:special_cases}). The setting of  regenerating codes (RC) introduced in \cite{dimakis} corresponds to the case with $t = 1, m = 1, \ell = 0$ - we refer to these as classical RCs. The case of single node failure ($t = 1$) in clustered systems was previously studied in \cite{genrcTIT}, where the authors first identify the storage-vs-inter-cluster-BW trade-off (ignoring intra-cluster BW), and then find bounds on the minimum intra-cluster BW that is needed to achieve this trade-off. The authors, show the surprising fact that while increasing the number of local helper nodes $\ell$ improves the storage-vs-inter-cluster-repair-BW trade-off; increasing $\ell$ not only increases intra-cluster BW in the host-cluster (this is obvious since one downloads $\ell \alpha$ symbols), but also increases the intra-cluster BW in the $d$ remote helper clusters. In other words, in situations when intra-cluster BW cannot be entirely ignored, the choice of the number of local nodes becomes an important one.

Motivated by the above result of \cite{genrcTIT}, for the case of multiple failures that we consider here, even though we do not explore bounds on intra-cluster BW in this paper, we still parametrize the number of local helper nodes $\ell$ in the range $0 \leq \ell \leq m-t$, so that our results remain relevant for a future study on intra-cluster BW for $t \geq 1$. The case of $\ell = m - t, t \geq 1$ has been previously studied in \cite{skoglund_partial}. However, as we show in this paper, even when restricted to storage-vs-inter-cluster-BW trade-off (ignoring intra-cluster BW), the case $0 \leq \ell < m-t, t > 1$ offers several surprising results which cannot be inferred from an analysis of the case $\ell = m-t, t > 1$. Following is a summary of our results in this paper:
\vspace{-.1in}
\subsection{Our Results}
\vspace{0.05in}
\emph{(a)}\emph{\underline{File-Size bound under functional repair:}}  Let $(m-\ell) = at + b, a \geq  1, 0 \leq b \leq t-1$. Then, the file-size under functional repair of an MRGRC is upper bounded by $B \leq  B_F^*$, where

{\small
\begin{equation}
 B_F^*  =  \ell k\alpha + a \sum_{i=0}^{k-1} \min (t\alpha, (d-i)\beta ) +  \sum_{i=0}^{k-1} \min (b\alpha, (d-i)\beta ). \label{eq:func_file_size}
\end{equation}
}

The bound is shown by considering the information-flow graph (IFG) under functional-repair, and calculating the minimum cut. The bound is indeed tight, if there is a known upper bound on the number of repairs in the system - the achievability follows from results in network coding~\cite{KoetterMedard}.

\vspace{0.05in}
\emph{(b)}\emph{\underline{File-Size bound under exact repair:}} For exact repair, we prove a tighter bound, given by
{\small
\begin{eqnarray}
 B & \leq & B_E^* = \ell k \alpha + (m-\ell) \sum_{i = 0}^{k-1}\min\left(\alpha, \frac{(d-i)\beta}{t}\right). \label{eq:exact_file_size}
\end{eqnarray}}
We note that $B_E^* \leq B_F^*$. The bound is optimal at the minimum storage-overhead (MSR) and the minimum inter-cluster repair-bandwidth-overhead (MBR) points characterized by $B = m k \alpha$ and $t\alpha = d \beta$, respectively. We show how optimal constructions for the case $t >1$ can be directly obtained from optimal constructions for the case $t = 1$~\cite{genrcTIT}.

\vspace{0.05in}
\emph{\underline{Implications of the Bounds:}} \emph{Case a) $t|(m-\ell)$:} In this case, the bounds in \eqref{eq:func_file_size} and \eqref{eq:exact_file_size} coincide.
Further, \eqref{eq:func_file_size} gives the same S.O. vs IC-repair-bandwidth-overhead trade-off for any value of $t \geq 1$. i.e.,  under functional repair, there is no advantage to jointly repairing multiple nodes (instead of repairing one ). For exact repair, at the MSR and MBR points, there is no benefit to jointly repairing multiple nodes for any $t > 1$, irrespective of if $t|(m-\ell)$ or not.

\emph{Case b) $t\nmid(m-\ell)$:} In this case, it is possible that $B_F^* > B_E^*$. Specifically, at the MBR point with $t \alpha = d \beta$, we have $B_F^* > B_E^*$, whenever $k > 1$. This also means that the S.O. vs IC-repair-bandwidth-overhead trade-off under functional repair for the case $t >1$ (with $k > 1$) is strictly better than that for the case $t = 1$.  A comparison of trade-offs between exact and functional repair for the case of $\{(n = 5, k = 4, d = 4)(m = 3, \ell = 0, t = 2)\}$ is shown in Fig. \ref{fig:tradeoff}. In Fig. \ref{fig:rlnc_filesize}, we present a  simulation result that shows probability of successful decoding while using random linear network codes (RLNCs)~\cite{rlnc} with sufficiently large field size in an $\{(n = 3, k = 2, d = 2)(m = 3, \ell = 0, t = 2)\}$ storage system operating at the MBR point with $\beta = 2$. In this case, optimal file-sizes under exact and functional repair are $B_E^* = 9$ and $B_F^* = 10$. RLNCs enable functional repair, and our simulation result indeed confirms the achievability of file-size $B_F^* = 10$.

Another implication of the bounds relates to the usefulness of the number of local helper nodes $\ell$ used in the repair process. Under functional repair, for the case of $t =1$~\cite{genrcTIT}, if we fix $n, k, d, m, t, \alpha, \beta$, the optimal file-size increases strictly monotonically with $\ell$, whenever $\alpha > (d-k+1)\beta$ (i.e., if we exclude the MSR point) . However, strict monotonicity is not necessarily true when $t > 1$.  Specifically, at the MBR point, it can be shown that whenever $(m \mod t) \leq \lfloor (d-k+1)t/d \rfloor$, for any $\ell$ in the range $0 \leq \ell \leq (m \mod t)$,   the capacity is as good as with no local help at all (see Fig. \ref{fig:ell_plot}).

\vspace{-0.1in}
\subsection{Other Related Work}

The problem of multiple-node repair for classical RCs have been studied under the frameworks of cooperative repair~\cite{shum_coop, kermarrec_coop} and centralized repair~\cite{cadambe_asymptotic,ankit_centralized}.
In cooperative repair, each of the $t$ replacement nodes first individually contacts respective sets of $d$ helper nodes, and then communicates among themselves before restoring the new content. In centralized repair, a centralized compute node downloads data from some subset of $d$ nodes, and generates the data for all $t$ replacement nodes. Our repair model can be considered as a centralized repair model for clustered storage systems.

Regenerating code variations for data-center-like topologies consisting of racks and nodes are considered in \cite{plee_isit2016_doubleregen, clust_stor_Moon, gaston2013realistic, Gaston_nonhom, ozan_xyregen}. All these works focus on single-node repair, whereas we focus on multiple-node repairs. Further, the models in \cite{plee_isit2016_doubleregen}, \cite{clust_stor_Moon} and \cite{gaston2013realistic} use  clustering approach only for repair (by distinguishing inter and intra rack repair costs), and not for data-collection.
File retrievability is demanded from any set of $k$ nodes in the whole system, irrespective of which clusters they belong to. The difference in data collection model is the main difference between our model and the models in \cite{plee_isit2016_doubleregen}, \cite{clust_stor_Moon} and \cite{gaston2013realistic}.

We next describe how exact repair codes for $t > 1$ can be directly obtained from exact repair codes for $t=1$. In Sections \ref{sec:exact} and \ref{sec:func} we discuss the exact-repair and functional-repair bounds, respectively. For functional repair, our IFG model is substantially different, and more elaborate than the one used in \cite{skoglund_partial} for the case of $\ell = m-t$. The  complexity of our model comes from the need to handle the case $\ell < m-t$.

 \section{Exact Repair Codes}
 Optimal constructions of exact repair MRGRCs for any $t > 1$ can be directly obtained from constructions for the case $t = 1$, whenever $t|\beta$. In order to construct an exact repair MRGRC $\mathcal{C}$ with parameters $(n, k, d)(\alpha, \beta)(m, \ell, t), t|\beta$, we start with an exact repair code~\cite{genrcTIT} $\mathcal{C}'$ with parameters $\{(n, k, d)(\alpha, \beta' = \beta/t)(m, \ell, t' = 1)\}$. The code $\mathcal{C}'$ was shown to exist at the MSR and MBR points; in fact it was shown in \cite{genrcTIT} that an optimal $(n, k, d)(\alpha, \beta')(m, \ell, t = 1)$ $\mathcal{C}'$ can be constructed whenever a classical exact repair  ${(n, k, d)(\alpha, \beta')}$ RC exists, with file-size $\sum_{i=0}^{k-1}\min(\alpha, (d-i)\beta')$.

 The code  $\mathcal{C}'$ can be viewed as the code $\mathcal{C}$ as it is, if we assume that  repair of any group of $t$ nodes in $\mathcal{C}$ happens one node at a time via the repair procedure in $\mathcal{C}'$.  Also, we use the same set of local and external helpers for the repair of any of the $t$ failed nodes. Inter-cluster bandwidth, for the repair of the entire group, per external helper amounts to $\beta = t\beta'$. The file-size $B$ that we obtain is given by
 \begin{IEEEeqnarray}{rCl}
     B =  B' & = & \ell k \alpha + (m-\ell)\sum_{i = 0}^{k-1}\min(\alpha, (d-i)\beta') \nonumber \\
     & = & \ell k \alpha + (m-\ell)\sum_{i = 0}^{k-1}\min\left(\alpha, \frac{(d-i)\beta}{t}\right).
 \end{IEEEeqnarray}
 \vspace{-0.1in}
 \section{File Size bound, exact repair} \label{sec:exact}
 In this section, we present the proof of the file-size upper bound in \eqref{eq:exact_file_size} for exact repair codes. We assume the code to be deterministic; by this we mean that the helper data is uniquely determined given the indices of the $t$ failed nodes, local helper nodes and helper clusters. We begin with useful notation.
 Let $\mathcal{F}$ denote the random variable corresponding to the data file that gets stored. We assume $\mathcal{F}$ to be uniformly distributed over  $\mathbb{F}_q^{B}$. Let $Y_{i, j} \in \mathbb{F}_q^{\alpha}, 1 \leq i \leq n, 1 \leq j \leq m$ denote the content stored in node $j$ of cluster $i$. For $j \leq j'$, we write $Y_{i,[j , j']}$ to denote $\{Y_{i, j}, Y_{i, j+1}, \ldots, Y_{i, j'}\}$.
 We also write ${\bf Y}_i$ to denote $Y_{i, [1 , m]}$. Further, for $i \leq i'$, ${\bf Y}_{[i , i']}$ will denote $\{ {\bf Y}_i, \ldots, {\bf Y}_{i'} \}$. The property of data collection demands that
 \begin{eqnarray} \label{eq:data_collect}
 H\left(\mathcal{F} | \{{ \bf Y}_i, i \in S \}\right) & = & 0 \ \forall S \subset [n], |S| = k,
 \end{eqnarray}
 where $H(.)$ denotes the entropy function computed with respect to $\log q$. Next, consider the repair of $t$ nodes indexed by $\mathcal{R}_i$ in cluster $i$. Let $\mathcal{H} \subset [n]\backslash\{i\}, |\mathcal{H}| = d$, and $\mathcal{L} \subset [m]\backslash \mathcal{R}_i, |\mathcal{L}| = \ell$ respectively denote the indices of helper clusters and local nodes that aid in the repair process. Let $Z_{i', \mathcal{R}_i}^{\mathcal{H}, \mathcal{L}}$ denote helper data passed by cluster $i'$. The property of exact repair is jointly characterized by the following set of inequalities: $H\left(Z_{i', \mathcal{R}_i}^{\mathcal{H}, \mathcal{L}} | {\bf Y}_{i'} \right)  =  0$, $H\left(Z_{i', \mathcal{R}_i}^{\mathcal{H}, \mathcal{L}}\right)  \leq  \beta$, and
 {
 	\begin{eqnarray}
 	H\left(\{Y_{i, j}, j \in \mathcal{R}_i\}| \{Z_{i', \mathcal{R}_i}^{\mathcal{H}, \mathcal{L}}, Y_{i, j'}, i' \in \mathcal{H},  j' \in \mathcal{L}\}\right)& = & 0, \nonumber \\
 	& & \hspace{-2.75in}   \forall \mathcal{H}  \subset [n]\backslash \{i\}, |\mathcal{H}| = d, \forall \mathcal{L} \subset [m]\backslash \mathcal{R}_i, |\mathcal{H}| = \ell. \label{eq:exact_rep}
 	\end{eqnarray}
 }
 Our proof technique of the file-size bound presented here, though has some similarity with the information theoretic techniques in works like \cite{ankit_centralized}, \cite{rbt}, it differs in an important way.  The proofs in these other works rely on the chain rule of entropy, and so does our proof; however in here we demand that the chain be expanded in a specific order. The following lemma is used to determine this order. The lemma is valid only when $b > 0$, where $(m-\ell) = at + b, a \geq 1, 0 \leq b \leq t-1$. When $b = 0$, the bound-proof does not need this lemma.

 \begin{lem} \label{lem:exact}
 	Let $(m-\ell) = at + b, a \geq 1, 1 \leq b \leq t-1$. Consider any $S_i \subset [n], |S_i| = i, 1 \leq i \leq k-1$, and let $\mathcal{Y}(S_i) = \{ {\bf Y}_{i}, i \in S_i \}$.  Then, for any $i' \in [n]\backslash S_i$, there exists a permutation $\sigma_{i', S_i}$ of  $\{\ell+1, \ell + 2, \ldots, m\}$
 	such that
 	{
 		\begin{equation} \label{eq:lem_exact1}
 		H\left(Y_{i', \sigma_{i', S_i}(j')}|\mathcal{Y}(S_i), \widetilde{\mathcal{Y}}(i', S_i, j') \right)  \leq  \min\left(\alpha, \frac{(d-i)\beta}{t}\right),
 		\end{equation}
 	}
 	for all $j' \in \{m - b + 1, m - b + 2, \ldots, m\}$, where
 	{
 		\begin{equation} \label{eq:lem_exact2}
 		\widetilde{\mathcal{Y}}(i', S_i,j')  = Y_{i', [1 , \ell]} \cup  \{Y_{i', \sigma_{i', S_i}(j)}, j \in [\ell+1 , j'-1 ] \}.
 		\end{equation}
 	}
 \end{lem}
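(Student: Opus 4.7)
My plan is to construct $\sigma_{i',S_i}$ in reverse order, filling positions $m, m-1, \ldots, m-b+1$ by iteratively picking a non-local node of cluster $i'$ whose conditional entropy can be bounded by $(d-i)\beta/t$. The $\alpha$ branch of the bound is automatic since each node stores only $\alpha$ symbols, so the real work is the $(d-i)\beta/t$ branch. Writing $M := \{\ell+1, \ldots, m\}$, the starting point is that after folding the $i$ clusters of $S_i$ into the $d$ helper clusters (so that the corresponding $Z$-symbols become functions of $\mathcal{Y}(S_i)$), the exact-repair constraints in \eqref{eq:exact_rep} give, for every $t$-subset $T \subset M$,
\[
H(Y_{i',T} \mid \mathcal{Y}(S_i), Y_{i',[1,\ell]}, Y_{i',M \setminus T}) \leq (d-i)\beta.
\]
Applying Han's inequality to the left-hand side, and noting that $Y_{i',M \setminus T} \cup Y_{i',T \setminus v} = Y_{i',M \setminus v}$, yields
\[
\sum_{v \in T} E_v \leq (d-i)\beta, \qquad E_v := H(Y_{i',v} \mid \mathcal{Y}(S_i), Y_{i',[1,\ell]}, Y_{i',M \setminus v}),
\]
for every $t$-subset $T \subset M$. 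Since the $t$ largest of the $at+b$ quantities $E_v$ sum to at most $(d-i)\beta$, the $t$-th largest is at most $(d-i)\beta/t$, so at least $(a-1)t + b + 1$ of them are bounded by $(d-i)\beta/t$.

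Setting $\sigma(m)$ to be any node with $E_{\sigma(m)} \leq (d-i)\beta/t$ settles the $j' = m$ case, because the conditioning $\widetilde{\mathcal{Y}}(i',S_i,m)$ there equals $Y_{i',[1,\ell]} \cup Y_{i',M \setminus \sigma(m)}$ and so the conditional entropy is exactly $E_{\sigma(m)}$. The main obstacle is the interior positions $j' = m-1, \ldots, m-b+1$: at these positions $\widetilde{\mathcal{Y}}(i',S_i,j')$ omits the already-placed nodes and is therefore strictly smaller than $Y_{i',M \setminus \sigma(j')}$, so $E_{\sigma(j')}$ is only a \emph{lower} bound on the quantity we need to upper-bound, and a naive greedy step based on the global $E$-values does not suffice.

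To handle this, I would iterate the repair/Han argument on the residual set $M \setminus P_r$ (with $P_r$ the set of nodes already placed in positions $m-r+1, \ldots, m$), producing a sum inequality for the residual quantities $E^{(r)}_v := H(Y_{i',v} \mid \mathcal{Y}(S_i), Y_{i',[1,\ell]}, Y_{i',(M \setminus P_r) \setminus v})$, which match exactly the conditional entropies appearing in the chain rule at position $m-r$. The delicate step, and what I expect to be the technical crux, is verifying that the right-hand side of this residual sum inequality is still bounded by $(d-i)\beta$: the repair bound on a $t$-subset $T \subset M \setminus P_r$ naturally controls $H(Y_{i',T} \mid \cdots, Y_{i',M \setminus T})$ with $Y_{i',P_r}$ in the conditioning, whereas the Han step producing the $E^{(r)}_v$'s uses a weaker conditioning without $Y_{i',P_r}$, and the discrepancy is a mutual-information term $I(Y_{i',T}; Y_{i',P_r} \mid \cdots)$ that must be controlled using the way $P_r$ was selected. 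Once this is settled, a pigeonhole count on the residual set (at least $(a-1)t + b - r + 1 \geq 2$ candidates with small $E^{(r)}_v$ at stage $r \leq b-1$) furnishes a valid choice of $\sigma(m-r)$ at every stage, and the first $at$ positions of $\sigma$ can be filled arbitrarily to complete the construction.
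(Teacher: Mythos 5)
Your construction is sound, and the step you flag as the unresolved ``technical crux'' is in fact immediate; there is no mutual-information discrepancy to control. The point is that the exact-repair constraint \eqref{eq:exact_rep} gives the bound $H\bigl(Y_{i',T} \mid \mathcal{Y}(S_i), Y_{i',[1,\ell]}\bigr) \leq \min\bigl(t\alpha,(d-i)\beta\bigr)$ for every $t$-subset $T \subseteq \{\ell+1,\dots,m\}$ \emph{without} any conditioning on the remaining nodes of cluster $i'$: choose the helper set $\mathcal{H}$ to contain all of $S_i$ (possible since $i \leq k-1 \leq d-1$) plus $d-i$ clusters outside $S_i \cup \{i'\}$, and take the local helpers to be $[1,\ell]$; by the determinism assumption the helper symbols contributed by the clusters of $S_i$ are functions of $\mathcal{Y}(S_i)$, so only the remaining $d-i$ helper symbols, of at most $\beta$ each, carry entropy. (This is exactly the inequality the paper itself invokes at the start of \eqref{eq:lemma_proof_temp2}.) Since additional conditioning only decreases entropy, the same bound holds after conditioning on $Y_{i',(M\setminus P_r)\setminus T}$, and your chain-rule/Han step then yields $\sum_{v\in T} E^{(r)}_v \leq (d-i)\beta$ for every $t$-subset $T$ of the residual set, at every stage $r$ and for every choice of $P_r$ --- the selection history is irrelevant. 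Since $|M\setminus P_r| = at+b-r \geq t+1$ for $r \leq b-1$, your pigeonhole always produces a residual index $v$ with $E^{(r)}_v \leq (d-i)\beta/t$ (and $E^{(r)}_v \leq \alpha$ trivially), and by your bookkeeping the conditional entropy in \eqref{eq:lem_exact1} at position $j'=m-r$ equals exactly $E^{(r)}_{\sigma(j')}$. So the only defect of the write-up is that you leave this step open and misdiagnose where its difficulty lies; filled in as above, your proof is complete.

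The completed argument is genuinely different from the paper's, and arguably cleaner. The paper builds the permutation greedily by minimizing, over pairs consisting of a node $Y_{i',j}$ in the residual set and a companion $(t-1)$-subset $\mathcal{V}$ of that set, the quantity $H\bigl(Y_{i',j}\mid \mathcal{V},\mathcal{Y}(S_i),Y_{i',[1:\ell]}\bigr)$; it then expands $H\bigl(Y_{i',j_{m-x}},\mathcal{V}_{m-x}\mid \mathcal{Y}(S_i),Y_{i',[1:\ell]}\bigr) \leq \min\bigl(t\alpha,(d-i)\beta\bigr)$ by the chain rule and uses an exchange step (replacing the conditioning of each chain term by another admissible $(t-1)$-subset containing the selected node) to conclude that the minimizing term is at most $1/t$ of the total. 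Your version conditions each single node on the entire residual complement and averages over a worst $t$-subset, which is the same averaging idea but avoids tracking companion subsets, so the per-stage selection reduces to a one-line pigeonhole; the price is nil, since the needed residual sum inequality is, as explained above, an immediate consequence of the same repair bound the paper uses.
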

 \begin{proof}
 	In here, we only present the candidate for the permutation $\sigma_{i', S_i}$.  The proof that this satisfies the lemma can be found in Appendix A.  Consider the content, of the cluster $i'$, given by  $\{Y_{i', 1}, Y_{i', 2}, \ldots, Y_{i',m} \}$. Define the quantities $(j_{m}, \mathcal{V}_{m}), (j_{m-1}, \mathcal{V}_{m-1}), \ldots, (j_{m - b+1}, \mathcal{V}_{m - b + 1})$ in this respective order as below:
 	\begin{framed}
 		\begin{enumerate}[Step 1.]
 			\item Let $ \mathcal{U} = \{Y_{i', \ell + 1}, Y_{i', \ell+2}, \ldots, Y_{i',m} \}$, and $x = 0$
 			\item  Define $(j_{m - x}, \mathcal{V}_{m - x})$ as
 			\begin{eqnarray} \label{eq:def_j}
 			(j_{m - x}, \mathcal{V}_{m - x}) & = & \arg \min_{\substack{(j, \mathcal{V}) \ : \ \\ Y_{i', j} \in \mathcal{U} \\ \mathcal{V} \subset \mathcal{U} \backslash \{Y_{i', j}\}, |\mathcal{V}| = t-1}} \Theta, \nonumber
 			\end{eqnarray}
 			where $\Theta = H\left(Y_{i',j}|\mathcal{V}, \mathcal{Y}(S_i), Y_{i', [1:\ell]} \right)$.
 			\item If $x < b-1$, update $\mathcal{U}$ as $\mathcal{U} = \mathcal{U} \backslash \{ Y_{i', j_{m - x}}\}$. Increment $x$ by $1$ and return to Step $2$.
 		\end{enumerate}
 	\end{framed}
 	Additionally, let us also define
 	$\{ j_{\ell + 1}, j_{\ell + 2}, \ldots, j_{m - b} \}  \triangleq  \{\ell + 1, \ldots, m\}$ $\backslash \ \{j_{m}, j_{m-1}, \ldots, j_{m-b+1} \}$. In the preceding  definition, we only need equality as sets. We do not care about any particular ordering of the elements in  $\{\ell + 1, \ldots, m\} \ \backslash \ \{j_{m}, j_{m-1}, \ldots, j_{m-b+1} \}$, while associating these with $\{ j_{\ell + 1}, j_{\ell + 2}, \ldots, j_{m - b} \}$. The candidate for the permutation $\sigma_{i', S_i}$ on the set $\{\ell + 1, \ldots, m\}$ is now defined as follows:
 	\begin{eqnarray} \label{eq:perm}
 	\sigma_{i', S_i}(p) = j_{p}, \ \ell + 1 \leq p \leq m.
 	\end{eqnarray}
 \end{proof}

 \emph{\underline{Proof of Exact Repair Upper Bound}:} We have
 {\small
 	\begin{eqnarray}
 	B & = & H(\mathcal{F})  \leq  H({\bf Y}_{[1  , k]})   =  \sum_{i' = 1}^{k} H({\bf Y}_{i'} | {\bf Y}_{[1  , i'-1]}) \nonumber \\
 	&  \hspace{-0.5in} =  & \hspace{-0.4in} \sum_{i' = 1}^{k} \left( H(Y_{i', [1  , \ell]}|{\bf Y}_{[1  , i'-1]}) +
 	H(Y_{i', [\ell + 1  , m]}|Y_{i', [1  , \ell]}, {\bf Y}_{[1  , i'-1]}) \right)
 	\nonumber \\
 	& \leq & \ell k \alpha +  \sum_{i' = 1}^{k}
 	H(Y_{i', [\ell + 1  , m]}|Y_{i', [1  , \ell]}, {\bf Y}_{[1  , i'-1]}). \label{eq:lemma_exact_1}
 	\end{eqnarray}
 }
 Now, if we let $\sigma = \sigma_{i', [1 , i'-1]}$ to be the permutation as obtained from Lemma \ref{lem:exact}, then we expand the term $H(Y_{i', [\ell + 1 , m]}|Y_{i', [1 , \ell]}, {\bf Y}_{[1 , i'-1]})$ in \eqref{eq:lemma_exact_1} using the order determined by the permutation $\sigma$, as follows:
 \begin{IEEEeqnarray}{rCl}
 	\IEEEeqnarraymulticol{3}{l}{H(Y_{i', [\ell + 1 , m]}|Y_{i', [1 , \ell]}, {\bf Y}_{[1 , i'-1]})}  \nonumber \\
 	&  = &  H(\{Y_{i', \sigma(j')}, j' \in [\ell + 1 , m]\}|Y_{i', [1 , \ell]}, {\bf Y}_{[1 , i'-1]})   \nonumber \\
 	&  \leq & \sum_{u = 0}^{a-1}H(\{Y_{i', \sigma(\ell+ut+v)}, v \in [1 , t]\}|Y_{i', [1 , \ell]}, {\bf Y}_{[1 , i'-1]}) \nonumber \\
 	& & + \sum_{j' = m - b + 1}^{m}H(Y_{i', \sigma(j')}, |{\bf Y}_{[1 , i'-1]}, \widetilde{\mathcal{Y}}(i', [1 , i'-1],j')), \IEEEeqnarraynumspace \label{eq:lemma_exact_2}
 \end{IEEEeqnarray}
 where $\widetilde{\mathcal{Y}}(i', [1 , i'-1], j')$ is defined using \eqref{eq:lem_exact2}. Using \eqref{eq:exact_rep}, each term under the first summation in \eqref{eq:lemma_exact_2} is upper bounded by $\min(t\alpha, (d-i'+1)\beta)$, while each term under the second summation in  \eqref{eq:lemma_exact_2} is upper bounded using Lemma \ref{lem:exact}. Thus, we get that
 \begin{IEEEeqnarray}{rCl}
 	\IEEEeqnarraymulticol{3}{l}{H(Y_{i', [\ell + 1 , m]}|Y_{i', [1 , \ell]}, {\bf Y}_{[1 , i'-1]})} \nonumber \\
 	& \leq &  a \min(t\alpha, (d-i'+1)\beta) +b \min\left(\alpha, \frac{(d-i'+1)\beta}{t}\right) \nonumber \\
 	&  =  &  (m-\ell) \min\left(\alpha, \frac{(d-i'+1)\beta}{t}\right) \label{eq:lemma_exact_3}.
 \end{IEEEeqnarray}
 The desired bound now follows by combining \eqref{eq:lemma_exact_1} with \eqref{eq:lemma_exact_3}.

 \section{General File Size bound, functional repair} \label{sec:func}

 In this section, we present the file-size upper bound under functional repair via IFG analysis. Under functional repair, ability to recover a file after a sequence of node failures and repairs is equivalent to multicasting the source file to an arbitrary number of data collectors over the IFG~\cite{dimakis}. The IFG characterizes the data flows from the source to a data collector, and reflects the sequence of failures and repairs in the storage network. The IFG used here (see Fig. \ref{fig:cut}) is a generalization of the one presented in~\cite{genrcTIT} for the case of $t = 1$.

 \begin{figure}
 	\centering
 	\includegraphics[width=85mm]{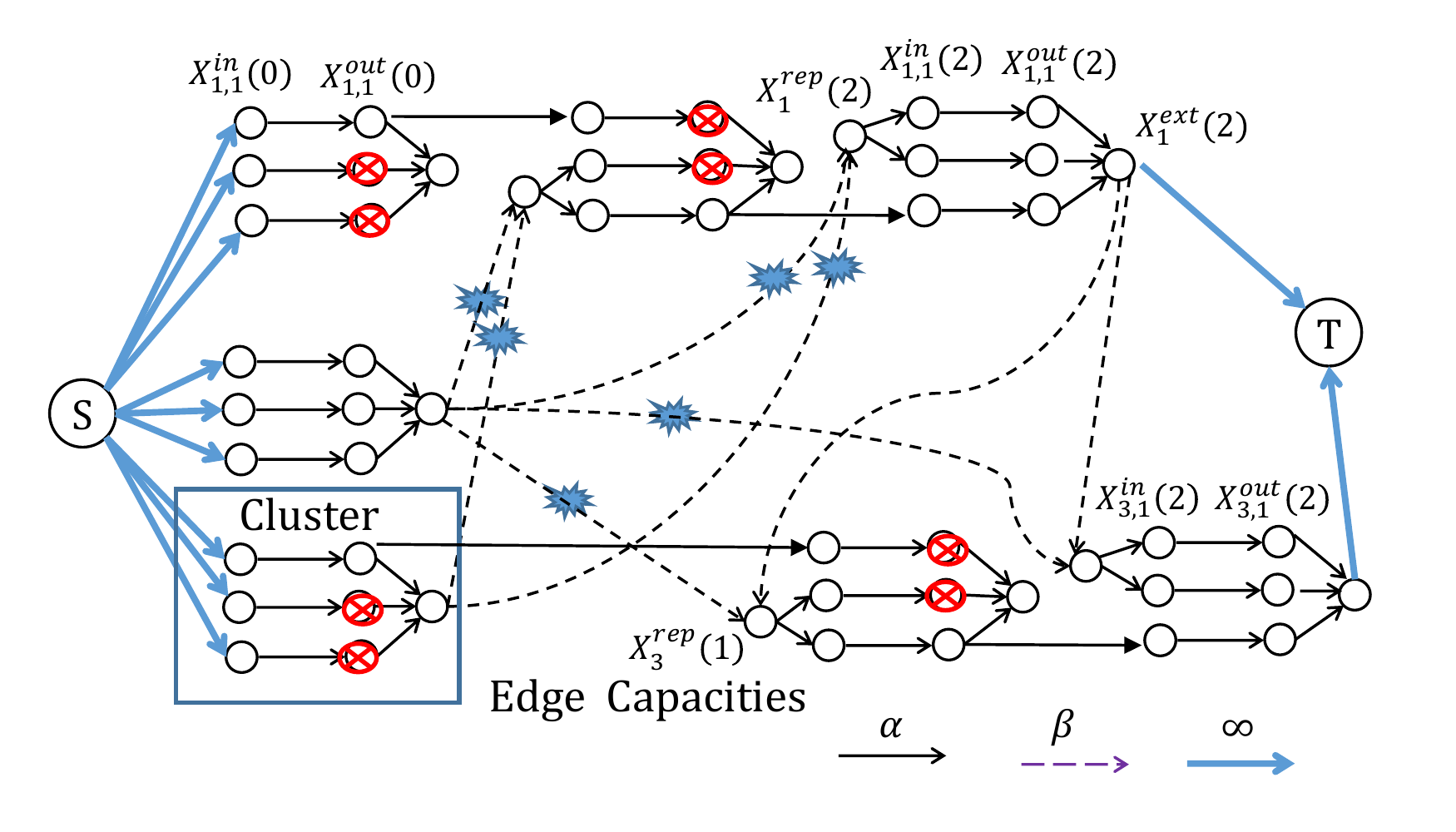}
 	\caption{An illustration of the information flow graph used in cut-set based upper bound for the file-size under functional repair. We assume $(n = 3, k = 2, d= 2) (m = 3, \ell = 0, t = 2)$. Only a subset of nodes are named so as to avoid clutter. Two batches, each of $t = 2$ nodes, fail and get repaired first in cluster $1$ and then in cluster $3$. We also indicate a possible choice of the $S-T$ cut that results in the desired upper bound.  We fail nodes in cluster $3$ instead of cluster $2$ only to make the figure compact. }
 	\label{fig:cut}
 	\vspace{-0.25in}
 \end{figure}
\vspace{-0.1in}
\subsection{Information Flow Graph Model}
Let $\mathcal{X}_i$ denote the physical cluster $i$, and let $X_{i, j}$ denote the physical node $j$ in cluster $i$, $ 1 \leq i \leq n, 1 \leq j \leq m$. In the IFG, $X_{i, j}$  is mapped to the pair of nodes $(X_{i, j}^{in}, X_{i, j}^{out})$ such that the edge $X_{i, j}^{in} \rightarrow X_{i, j}^{out}$ has capacity $\alpha$. The \emph{external} node $X_i^{ext}$ of cluster $i$ serves to transfer data outside the cluster. The $m$ out-nodes connect to $X_i^{ext}$ via edges of capacity $\alpha$.

When a cluster, say $i$, experiences a batch of $t$ failures, the whole cluster becomes inactive and is replaced with a new active cluster. In the new cluster, a  special  \emph{repair} node $X_i^{rep}$ is used to combine local and external helper data, and generate the content of the replacement nodes. The out nodes of the $\ell$ local helper nodes connect to  $X_i^{rep}$ via links of capacity $\alpha$, and the external nodes of the $d$ helper clusters connect to $X_i^{rep}$ via links of capacity $\beta$. Also, $X_i^{rep}$ connects to the in-nodes of the replacement nodes via links of capacity $\alpha$. Further, the $m-t$ nodes, which did not experience failure in the inactive cluster are copied as such in the new active cluster. At any point in time, physical cluster $i$ contains one active and $f_i$ inactive clusters in the IFG where $f_i \geq 0$ denotes the total number of batch failures and repairs in the cluster. We write $\mathcal{X}_i(\tau), 0 \leq \tau \leq f_i$ to denote the cluster in the IFG after the $\tau^{\text{th}}$ (batch) repair associated with cluster $i$, and use $\fset_i(\tau), 0 \leq \tau \leq f_i - 1$ to denote the indices of nodes that fail in $\mathcal{X}_i(\tau)$.  The clusters  $\mathcal{X}_i(0), \ldots, \mathcal{X}_i(f_i-1)$ are inactive, while $\mathcal{X}_i(f_i)$ is active, after $f_i$ repairs.
The nodes of  $\mathcal{X}_i(\tau)$ will be denoted by $X_{i,j}^{in}(\tau), X_{i,j}^{out}(\tau), X_{i}^{ext}(\tau), X_i^{rep}(\tau)$ (there is no repair node if $\tau = 0$).

Finally, the source node $S$ connects to all the $mn$ in-nodes $X_{i,j}^{in}(0), 1 \leq i \leq m, 1 \leq j \leq n$ via links of infinite capacity. The data collector $T$ connects to $k$ external nodes, say $X_{i}^{ext}(f_i), 1 \leq i \leq k$ also via links of infinite capacity.
\vspace{-0.1in}

\subsection{File-Size Upper Bound}

We explain the proof of the bound in \eqref{eq:func_file_size} by considering the special case $(n = 3, k = 2, d = 2)(\alpha, \beta)(m = 3, \ell = 0, t =2)$. A full proof appears in Appendix B. Note that for this special case, $t \nmid (m-\ell)$ and this will help us illustrate the difference between functional and exact repair. Consider the following sequence of $4$ batches of failures and repairs (see Fig. \ref{fig:cut}). Batches $1$ and $2$ are associated with cluster $1$ with $\mathcal{R}_1(0) = \{ 2, 3\}$ and $\mathcal{R}_1(1) = \{ 1, 2\}$.  Batches $3$ and $4$ are associated with cluster $3$ with $\mathcal{R}_3(0) = \{ 2, 3\}$ and $\mathcal{R}_3(1) = \{ 1, 2\}$. There is no local help in this example, cluster $1$ receives external help from $X_{2}^{ext}(0)$ and $X_{3}^{ext}(0)$ for both batches of repairs, while cluster $3$ receives external help from $X_{2}^{ext}(0)$ and $X_{1}^{ext}(2)$ for its repairs. Consider data collection by connecting to $X_1^{ext}(2)$ and $X_3^{ext}(2)$, and consider the $S$-$T$ cut whose edges are found as follows: For disconnecting $X_{1, 1}^{out}(2)$ and $X_{1, 2}^{out}(2)$, we either remove (based on whichever has smaller capacity) the two edges $X_{1, 1}^{in}(2) \rightarrow X_{1, 1}^{out}(2)$ and $X_{1, 2}^{in}(2) \rightarrow X_{1, 2}^{out}(2)$ or the set of helper edges $X_{2}^{ext}(0) \rightarrow X_1^{rep}(2)$ and $X_{3}^{ext}(0) \rightarrow X_1^{rep}(2)$. For disconnecting $X_{1, 3}^{out}(2)$, we either remove the \emph{single} edge $X_{1, 3}^{in}(1) \rightarrow X_{1, 3}^{out}(1)$ or the set of \emph{two} helper edges $X_{2}^{ext}(0) \rightarrow X_1^{rep}(1)$ and $X_{3}^{ext}(0) \rightarrow X_1^{rep}(1)$. The set of edges that disconnects cluster $3$ is similarly found, except that if we choose to disconnect links from external helpers, we only disconnect those from $X_{2}^{ext}(0)$ and not $X_{1}^{ext}(2)$. The value of the cut forms an upper bound for $B$, and is given by $B \leq  \min(2\alpha, d\beta)+ \min(\alpha, d\beta) + \min(2\alpha, (d-1)\beta) + \min(\alpha, (d-1)\beta)$, which is the same as the one give by \eqref{eq:func_file_size}.

\emph{Converse:} We note that it can also be shown that any valid IFG, regardless of the specific sequence of failures and repairs, $B_F^*$ (see \eqref{eq:func_file_size}) is indeed a lower bound on the minimum possible value of any $S$-$T$ cut. Please see Appendix C for a proof of this fact, which establishes the system capacity under functional repair. Note that the RLNC simulation in Fig. \ref{fig:rlnc_filesize} is an experimental verification for the validity of this converse statement.

\bibliographystyle{IEEEtran}
\bibliography{citations}

\newpage

\appendices

\section{Proof of Lemma \ref{lem:exact}} \label{app:lem_exact}

We will  show that the permutation $\sigma_{i', S_i}$ is such that
\begin{equation} \label{eq:lem_exact1_rep}
H\left(Y_{i', \sigma_{i', S_i}(j')}|\mathcal{Y}(S_i), \widetilde{\mathcal{Y}}(i', S_i, j') \right)   \leq  \min\left(\alpha, \frac{(d-i)\beta}{t}\right),
\end{equation}
for all $j' \in \{m - b + 1, m - b + 2, \ldots, m\}$, where
\begin{equation} \label{eq:lem_exact2_rep}
\widetilde{\mathcal{Y}}(i', S_i,j')   = Y_{i', [1 , \ell]} \cup  \{Y_{i', \sigma_{i', S_i}(j)}, j \in [\ell+1 , j'-1 ] \}.
\end{equation}
Consider the variable $j'$ appearing in \eqref{eq:lem_exact1_rep}, and let $j' = m - x$ for some $x, 0 \leq x \leq b-1$ so that using \eqref{eq:perm} we have, $\sigma_{i', S_i}(j') = j_{m - x}$. Consider the definition of $(j_{m - x}, \mathcal{V}_{m - x})$ in \eqref{eq:def_j}; we then know that
\begin{eqnarray} \label{eq:lemma_proof_temp1}
H\left( Y_{i', j_{m - x}}| \mathcal{V}_{m - x}, \mathcal{Y}(S_i), Y_{i', [1:\ell]} \right) & \leq & \nonumber \\
& & \hspace{-1in} H\left( Y_{i', j_{p}}| \mathcal{V}, \mathcal{Y}(S_i), Y_{i', [1:\ell]} \right),
\end{eqnarray}
for all $\mathcal{V} \subset \{Y_{i', j_{\ell + 1}}, Y_{i', j_{\ell + 2}}, \ldots, Y_{i',j_{m-x}}\} \backslash \{ Y_{i', j_p}\}$ such that $|\mathcal{V}| = t - 1$, and for all $p, \ell + 1 \leq p \leq m - x -1$. Towards proving \eqref{eq:lem_exact1_rep}, first of all, observe that
\begin{eqnarray} \label{eq:lem_exact1_step1}
H\left(Y_{i', \sigma_{i', S_i}(j')}|\mathcal{Y}(S_i), \widetilde{\mathcal{Y}}(i', S_i, j') \right)  & \leq &  \nonumber \\
& & \hspace{-1.7in} H\left(Y_{i', \sigma_{i', S_i}(j')}|\mathcal{Y}(S_i), \mathcal{V}_{m - x}, Y_{i', [1:\ell]} \right).
\end{eqnarray}
This follows from \eqref{eq:lem_exact2_rep} and because of the fact that $\mathcal{V}_{m - x} \subset \{Y_{i', j_{\ell + 1}}, Y_{i', j_{\ell + 2}}, \ldots, Y_{i',j_{m - x-1}}\}$. Without loss of generality, assume that
$\mathcal{V}_{m - x} = \{Y_{i', j_{\ell + 1}}, Y_{i', j_{\ell + 2}}, \ldots, Y_{i',j_{\ell + t-1}}\}$.
Next, from the exact repair condition given in \eqref{eq:exact_rep}, we know that
\begin{eqnarray}
\min(t\alpha, (d-i)\beta) & \geq & H\left(Y_{i', \sigma_{i', S_i}(j')}, \mathcal{V}_{m - x} | \mathcal{Y}(S_i), Y_{i', [1:\ell]} \right) \nonumber \\
& \hspace{-1.8in} = & \hspace{-1in} \sum_{p = \ell + 1}^{\ell + t-1} H\left(Y_{i',j_p} | Y_{i',j_{\ell + 1}}, \ldots, Y_{i',j_{p-1}}, \mathcal{Y}(S_i), Y_{i', [1:\ell]} \right) \ + \nonumber \\
&& \hspace{-0.5in}  H\left(Y_{i', \sigma_{i', S_i}(j')} |  \mathcal{V}_{m - x}, \mathcal{Y}(S_i), Y_{i', [1:\ell]} \right) \nonumber \\
& \hspace{-1.8in} \geq & \hspace{-1in}\sum_{p = \ell+1}^{\ell+t-1} H\left(Y_{i',j_p} | \mathcal{V}_{j_p}, \mathcal{Y}(S_i),Y_{i', [1:\ell]}  \right) \ + \nonumber \\
&& \hspace{-0.5in} H\left(Y_{i', \sigma_{i', S_i}(j')} |  \mathcal{V}_{m - x}, \mathcal{Y}(S_i), Y_{i', [1:\ell]}\right),  \label{eq:lemma_proof_temp2}
\end{eqnarray}
where $\mathcal{V}_{j_p} = \mathcal{V}_{m - x} \backslash \{ Y_{i', j_p}\} \cup \{ Y_{i', \sigma_{i', S_i}(j')}\}$. Noting that $|\mathcal{V}_{j_p}| = t-1$, we see that each term under the first summation in \eqref{eq:lemma_proof_temp2} can be lower bounded using \eqref{eq:lemma_proof_temp1}, i.e.,
\begin{eqnarray}
H\left(Y_{i',j_p} | \mathcal{V}_{j_p}, \mathcal{Y}(S_i), Y_{i', [1:\ell]} \right) & & \nonumber \\
& \hspace{-2in} \geq & \hspace{-1in} H\left( Y_{i', j_{m - x}}| \mathcal{V}_{m - x}, \mathcal{Y}(S_i), Y_{i', [1:\ell]} \right) \nonumber \\
& \hspace{-2in} = & \hspace{-1in} H\left(Y_{i', \sigma_{i', S_i}(j')} |  \mathcal{V}_{m - x}, \mathcal{Y}(S_i), Y_{i', [1:\ell]} \right). \label{eq:lemma_proof_temp3}
\end{eqnarray}
Combining \eqref{eq:lemma_proof_temp3} with \eqref{eq:lemma_proof_temp2}, it follows that
\begin{equation}
H\left(Y_{i', \sigma_{i', S_i}(j')} |  \mathcal{V}_{m - x}, \mathcal{Y}(S_i), Y_{i', [1:\ell]} \right)  \leq  \min\left(\alpha, \frac{(d-i)\beta}{t} \right). \label{eq:lemma_proof_temp4}
\end{equation}
The proof of lemma now follows by combining \eqref{eq:lemma_proof_temp4} with \eqref{eq:lem_exact1_step1}.

\section{Proof of upper-bound \ref{eq:func_file_size}} \label{app:func_upper}
We prove that under functional repair the file-size is upper-bounded by

\begin{IEEEeqnarray*}{rCl}
	B &\leq & B_F^*  =  \ell k\alpha + a \sum_{i=1}^{k} \min (t\alpha, (d-i+1)\beta ) \\
	& &+ \sum_{i=1}^{k} \min (b\alpha, (d-i+1)\beta ). \label{eq:func_file_size_proof}
\end{IEEEeqnarray*}

Let $[A,B]=\{\mbox{integer }x: A \leq x\leq B\}$, and $[B] = [1,B]$.

To show the bound, it is enough to demonstrate a sequence of batch failures and a set of $k$ clusters used by a data-collector, such that there exists a cut between the source and the data-collector with capacity no more than $B_F^*$. In the example sequence, that we consider, the clusters $1$ to $k$ are used for data-collection and experience node failures. At each of these clusters $a+1$ batch failures occur. They jointly cover the first $m-\ell$ nodes of a cluster. Specifically, at cluster $i\in [k]$, the first batch failure affects the last $t$ of these nodes: $\mathcal{R}_i(0)=\{ m-\ell-t+1,\dotsc, m-\ell\}$. The remaining batch failures affect disjoint sets of $t$ nodes starting from the first node $X_{i,1}$: $\mathcal{R}_i(1)=\{1,\dotsc, t\}$, $\mathcal{R}_i(2)=\{t+1,\dotsc, 2t\}$, until $\mathcal{R}_i(a)=\{(a - 1) t,\dotsc, a t\}$.

In all cases, the last $\ell$ nodes in a cluster provide the local helper data. For repairs in cluster $i$, clusters $1, \dotsc, i-1$ and $n-(d-i), \dotsc, n$ serve as helper clusters.
Failures first occur in cluster $1$, then in clusters $2,3$, etc. until cluster $k$.

In the IFG, corresponding to the described failure sequence, cluster $\mathcal{X}_i(a+1)$ is active for each $i\in k$. Let $\tau_j$ be the such that the cluster $\mathcal{X}_i(\tau_j)$ appears in the IFG right after the last repair of node $X_{i,j}$ (we say ``last repair" since nodes whose indices belong to  $\mathcal{R}_i(0) \cap \mathcal{R}_i(a)$ fail twice in our sequence of failures; other nodes in cluster $i$ fail only once). Consider a cut-set $(IFG_S, IFG_T)$ consisting of the following edges:
\begin{itemize}
	\item $X_{i,j}^{in}(a+1) \overset{\alpha}{\to} X_{i,j}^{out}(a+1), \forall i\in [k], j\in [m-\ell+1, m]$. Total capacity of these edges is $\ell k \alpha$.
	\item For all $i\in [k]$:
	\begin{itemize}
		\item Edge set $X_{i}^{rep}(\tau_j)\overset{\alpha}{\to} X_{i,j}^{in}(\tau_j), j\in [at]$, or edge set $X_{i'}^{ext}(0) \overset{\beta}{\to} X_{i}^{rep}(\tau_j) \forall i'\in [n-(d-i), n], j\in \{t, 2t, \dotsc, at \}$, whichever set capacity is smaller. Total capacity of these edges is $a\min (t\alpha, (d-i+1)\beta)$.
		\item If $b>0$: edge set $X_{i}^{rep}(\tau_j)\overset{\alpha}{\to} X_{i,j}^{in}(\tau_j), j\in [at+1, m-\ell]$,
		or edge set $X_{i'}^{ext}(0) \overset{\beta}{\to} X_{i}^{rep}(\tau_j) \forall i'\in [n-(d-i), n], j=m-\ell$, whichever set capacity is smaller. Total capacity of these edges is $\min (b\alpha, (d-i+1)\beta)$.
	\end{itemize}
\end{itemize}
The value of the cut is given by $\ell k\alpha + a \sum_{i=1}^{k} \min (t\alpha, (d-i+1)\beta ) +  \sum_{i=1}^{k} \min (b\alpha, (d-i+1)\beta )=B_F^*$.
\qedsymbol

\section{Min-Cut for IFG, Optimality of $B_F^*$ for general $\ell$} \label{app:func_converse}

We now show that for any valid IFG, regardless of the specific sequence of failures and repairs, $B_F^*$ is indeed a lower bound on the minimum possible value of any $S$-$T$ cut. Consider a cut of IFG, and let IFG$_S$ and IFG$_T$ be the two disjoint parts associated with nodes $S$ and $T$, respectively. Without loss of generality, we only consider cuts such that IFG$_T$ contains at least $k$ external nodes corresponding to active clusters. Consider a topological sorting of the IFG nodes such that: $1)$ an edge exists between two nodes $A$ and $B$ only if $A$ appears before $B$ in the sorting, and $2)$ all in-, out-, external, and repair nodes (if $\tau > 0$) of the cluster $\mathcal{X}_i(\tau)$ appear together in the sorted order, $\forall i,\tau$.

Consider the sequence $\mathcal{E}$ of all the external nodes in both active and inactive clusters in IFG$_T$ in their sorted order. Let $Y_1$ denote the first node in $\mathcal{E}$. Without loss of generality let $Y_1 = X_1^{ext}(\tau_1)$, for some $\tau_1$. In this case, consider the subsequence of $\mathcal{E}$ which is obtained after excluding all the external nodes associated with $\mathcal{X}_1$ from $\mathcal{E}$. Let $Y_2$ denote the first  external node in this subsequence. We continue in this manner until we find the first $k$ external nodes $\{ Y_1, Y_2, \ldots, Y_k\}$ in $\mathcal{E}$, such that each of the $k$ nodes corresponds to a distinct physical cluster. Without loss of generality, let us also assume that $Y_i = X_i^{ext}(\tau_i), 2 \leq i \leq k$, for some $\tau_i$. If $\tau_i = 0$,  then clearly cluster $i$ contributes (at least) $m\alpha$ to the cut. Thus let us assume that $\tau_i > 0, 1 \leq i \leq k$.

Consider the $m$ out-nodes $X_{i, 1}^{out}(\tau_i), \ldots, X_{i, m}^{out}(\tau_i)$ that connect to $X_i^{ext}(\tau_i)$.
For each $j\in [1,m]$, either $X_{i,j}^{out}(\tau_i)$ is in IFG$_S$ or there exists a minimal $\tau_{i, j} \in [0,\tau_i]$ such that $X_{i,j}^{out}(\tau_{i,j})\in \text{IFG}_T$. Consider those values of $j\in [1,m]$ for which all the following conditions hold:
\begin{gather}
X_{i,j}^{out}(\tau_i), X_{i,j}^{in}(\tau_{i,j}) \in \text{IFG}_T, j\in \fset_i(\tau_{i,j}-1), \nonumber \\
X_{i}^{rep}(\tau_{i,j}) \in \text{IFG}_T.  \label{eq:mi_conditions}
\end{gather}

Let there be $m_i\in [0,m]$ of such values, and, without loss of generality, let them be $m-m_i+1, \dotsc, m$. Also without loss of generality, let indices $j$ be sorted in the order of increasing $\tau_{i,j}$, i.e. $j_1<j_2$ implies $\tau_{i,j_1}\leq \tau_{i,j_2}$. For each $j \in [m-m_i+1, m]$, $\Sigma_{i,j} \triangleq \{ j': \tau_{i,j'} = \tau_{i,j}, j'\in [m-m_i+1, m] \}$ is a contiguous set of at most $t$ indices of the nodes with the same $\tau_{i,j}$, and which are repaired together from the same repair node.
Let $\mathcal{S}_i = \{\mbox{distinct } (\min \Sigma_{i,j}-1), \forall j \in [m-m_i+1, m]\} \subseteq [m-m_i, m-1]$ be the set of indices of the nodes preceding all contiguous groups $\Sigma_{i,j}$. Note that by $\min \Sigma_{i,j}$ we mean the minimum element contained in the set $\Sigma_{i,j}$. The set $\mathcal{S}_i$ is in one-to-one correspondence with the set of the repair nodes in \eqref{eq:mi_conditions} for $j \in [m-m_i+1, m]$. Note that $m-m_i$ is always an element of $\mathcal{S}_i$.

In order to relay helper data to $X_{i,j}^{in}(\tau_{i,j})$ for all $j \in [m-m_i+1, m]$, the number of these repair nodes should be at least $\lceil m_i/t \rceil$, and $|\mathcal{S}_i| \geq \lceil m_i/t \rceil$.
Each of these repair nodes connects to $d$ external nodes in other clusters. By construction of $\mathcal{E}$, at most $i-1$ of those external nodes can be in IFG$_T$.  Thus, each repair node contributes at least $(d-i+1)\beta$ of external helper data to the cut value. In addition, each repair node $X_i^{rep}(\tau_{i,j})$ connects to $\ell$ local nodes. By \eqref{eq:mi_conditions} and by construction of $\mathcal{S}_i$ and sorting of $\tau_{i,j}$, only nodes with indices $\{1, 2, \dotsc, j'\}$ out of these $\ell$ can be in IFG$_T$, where $j' = \min \Sigma_{i,j}-1$ is the corresponding element of $\mathcal{S}_i$.  Thus, repair node $X_i^{rep}(\tau_{i,j})$ contributes at least $(\ell-j')^+\alpha$ of local helper data to the cut value.

The contribution to the cut value of those $m-m_i$ indices of $j\in[1,m-m_i]$, which do not satisfy \eqref{eq:mi_conditions}, is at least $\alpha$ each.

Based on the observations above, the overall cut value is lower-bounded by
{ \begin{IEEEeqnarray}{rCl}
		\IEEEeqnarraymulticol{3}{l}{\text{mincut}(S-T)} \nonumber \\
		& \geq & \sum_{i=1}^{k} \big((m-m_i)\alpha  + \big\lceil \frac{m_i}{t} \big\rceil (d-i+1)\beta + \sum_{j'\in \mathcal{S}_i} (\ell-j')^+ \alpha \big). \IEEEeqnarraynumspace \label{eq:funcfilesizeLB1}
	\end{IEEEeqnarray}}

	Consider a particular value of $i\in [1,k]$ and the corresponding summation term in \eqref{eq:funcfilesizeLB1}. Let us assume that $m-m_i\geq \ell$, and $m_i = a_i t + b_i \leq m- \ell, b_i \in [0,t-1]$. Then the third term in \eqref{eq:funcfilesizeLB1} is zero, and
	{ \begin{IEEEeqnarray*}{rCl}
			\IEEEeqnarraymulticol{3}{l}{(m-m_i)\alpha  + \lceil m_i/t \rceil (d-i+1)\beta} \\
			& = & m \alpha - (a_i t + b_i) \alpha + (a_i + 1_{b_i>0})(d-i+1)\beta \\
			& = & m \alpha - a_i(t \alpha -(d-i+1)\beta) - b_i \alpha + 1_{b_i>0}(d-i+1)\beta \IEEEeqnarraynumspace \\
			& \overset{(1)}{\geq} & \ell \alpha + (m-\ell) \alpha - a(t \alpha -(d-i+1)\beta)^+ \\
			\IEEEeqnarraymulticol{3}{l}{\quad -(b \alpha - (d-i+1)\beta)^+} \\
			& = & \ell \alpha + a(t\alpha - (t \alpha -(d-i+1)\beta)^+) \\
			\IEEEeqnarraymulticol{3}{l}{\quad +(b\alpha - (b \alpha - (d-i+1)\beta)^+)} \\
			& = & \ell \alpha + a\min(t \alpha,(d-i+1)\beta) + \min(b\alpha,(d-i+1)\beta) \IEEEeqnarraynumspace \\
			& \triangleq & C_i,
		\end{IEEEeqnarray*}}
		where $(1)$ follows, because $a_i t + b_i = m_i \leq m - \ell = a t + b$, $a_i \leq a$, and, if $a_i=a$, $b_i\leq b$.

		On the other hand, if $m-m_i = \ell - \mu_i < \ell$, and $m_i > m-\ell = at+b, \ell - (m-m_i) = \mu_i >0$, then we have
		{ \begin{IEEEeqnarray*}{rCl}
				\IEEEeqnarraymulticol{3}{l}{(m-m_i)\alpha  + \lceil m_i/t \rceil (d-i+1)\beta + \sum_{j'\in \mathcal{S}_i} (\ell-j')^+ \alpha} \\
				& \geq & (\ell-\mu_i) \alpha + (a + 1_{b>0})(d-i+1)\beta + (\ell-(m-m_i))\alpha \IEEEeqnarraynumspace \\
				\IEEEeqnarraymulticol{3}{l}{\quad + \sum_{\substack{j'\in \mathcal{S}_i\\ j'>m-m_i}} (\ell-j')^+ \alpha} \\
				& = & \ell \alpha + (a + 1_{b>0})(d-i+1)\beta + \sum_{\substack{j'\in \mathcal{S}_i\\ j'>m-m_i}} (\ell-j')^+ \alpha \\
				& \geq & C_i,
			\end{IEEEeqnarray*}}
			where $C_i$ is the lower-bound for the case $m-m_i\geq \ell$.

			Since $B_F^* = \sum_i C_i$, it is indeed a lower-bound on the file-size. This proves tightness of $\ref{eq:func_file_size}$.

\end{document}